\documentclass[11pt]{article}
\usepackage{amssymb}
\usepackage{mathtools}
\usepackage{graphicx}
\usepackage{url}
\usepackage{theorem}
{\theorembodyfont{\rmfamily}
\newtheorem{definition}{\bfseries Definition}}
\newtheorem{proposition}{Proposition}
\newtheorem{theorem}{Theorem}
\newenvironment{proof}{{\bf \emph{Proof.} }}{\hfill $\Box$ \\} 
\begin{document}
\frenchspacing
\tolerance=5000
\title{Neutrosophic Graph Rewriting as a Computational Model}
\author{Apostolos Syropoulos\\ Xanthi, Greece\\ \texttt{asyropoulos@yahoo.com} \and
         Basil Papadopoulos\\ Department of Civil Engineering\\ Democritus University of Thrace\\ Greece\\ \texttt{papadob@civil.duth.gr}
\and     Binod Kumar Sharma\\ Department of Computer science and Engineering\\ Birla Institute of Technology\\ 
         Extension Center, Lalpur, Ranchi \\ Jharkhand\\ India\\ \texttt{bksharma@bitmesra.ac.in}}
\maketitle
\begin{abstract} 
Graphs, which are made up of points connected by lines, are used to represent relationships between objects. They are useful for solving problems in computer science, logistics, biology, and other fields, including social networks and the internet. Neutrosophic graphs are graphs whose points and lines are characterized by membership, indeterminacy, and non-membership degrees. In the simplest case, only lines are characterized by these degrees. The process by which we transform one graph to another one is called graph rewriting. It has been shown that graph rewriting can be described mathematically by pushouts, which are category-theoretic structures. It has also shown that fuzzy graphs have this property, and here we prove that neutrosophic graph rewriting can be described by pushouts in a certain category. By providing a fuzzy version of category theory, it is possible to define fuzzy categorical structures and thus to define neutrosophic graph rewriting in a more suitable manner. Since graph rewriting is actually a model of computation, neutrosophic graphs are a neutrosophic model of computation. A hypergraph is a generalization of a conventional graph. It allows a single ``hyperline'' to connect many points, which is different from regular graph lines that only connect two points. Neutrosophic hypergraphs can be used to reformulate Wolfram’s models, which are supposed to be a mathematical model of reality.
\end{abstract}
\section{Introduction}
A graph is a mathematical structure used to visually show the relationship between different items or quantities.
It is a fundamental concept in both mathematics and computer science. 
\begin{definition}
A graph, often denoted as $G=(V,E)$, consists of two main parts:
\begin{itemize}
\item Vertices (or Nodes): A set of objects, typically represented as points or circles. The set of all vertices is denoted by $V$.
\item Edges (or Links): Connections or relationships between pairs of vertices, represented as lines or curves. The set of all 
      edges is denoted by $E$.
\end{itemize}
\end{definition}
In computer science, graph rewriting, also known as graph transformation, is a method for methodically changing a graph by 
applying rewrite rules  that substitute a specific subgraph pattern with a new, different graph structure. Graph rewriting has found many
applications, for example:
\begin{itemize}
\item Graph Neural Networks (GNNs) are neural networks that operate on graphs instead of tabular data. In particular, GNNs  are essentially 
      graph transformations~\cite{machowczyk2023}.
\item Origami, the Japanese art of paper folding, has been formalized as graph rewriting~\cite{ida2008,ida2010}.
\item Compilers can optimize code by manipulating graph-like structures that represent intermediate code, which enables techniques 
      such as loop invariant computation and common subexpression elimination, through the use of graph rewriting~\cite{nagl1981}.
\item It has been shown that category theory offers specifications that can be efficiently implemented through imperative algorithms, and 
      this has been applied to the field of graph transformation. It was demonstrated through the use of examples how this software development 
      paradigm facilitates the rapid composition of efficient and accurate code~\cite{brown2022}.
\item It has been demonstrated how graph rewriting can been seen as a computational model~\cite{mcburney1991} and in the second section of
      this chapter we demonstrate this idea.
\end{itemize}
Pushouts are category theoretic constructs that are used to formalize the changing of a graph by applying rewrite rules---they precisely 
define how graphs are ``glued'' together during rewriting. A graph rewriting rule is typically given by a set of morphisms:
\begin{center}
\includegraphics[scale=1]{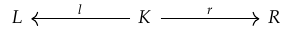}
\end{center}
where $L$ is the left-hand side (pattern to be replaced), $R$ is the right-hand side (replacement), $K$ is the part preserved during rewriting, $l$ 
and $r$ graph morphisms. Thus, $K$ is what is common between $L$ and $R$. In category theory, when given a set of arrows $f:A\rightarrow B$ and
$g:A\rightarrow C$ of some category $\mathcal{C}$, a pushout is a triple $(D, \alpha, \beta)$ with $\beta\circ g = \alpha\circ f$ that satisfies
the following universal property: for every triple $(E,\alpha',\beta')$ with $\beta'\circ g=\alpha'\circ f$, there exists a unique arrow 
$\theta:D\rightarrow E$ making the following diagram commute
\begin{center}
\includegraphics[scale=1]{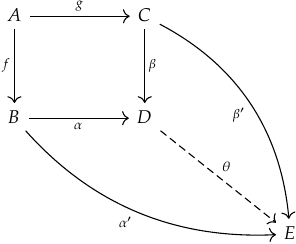}
\end{center}
Now, it should be obvious why pushouts are actually  graph rewriting rules. More generally, we can use two pushouts to define how to delete 
and then how to add subgraphs. From this step, we can define {\em pushout automata} by pushout composition of graphs. For a detailed description
of this technique, interested readers should consult the tutorial by Ehrig~\cite{ehrig1987}.

A {\em hypergraph} is a generalized graph where an edge (usually called {\em hyperedge}) can connect any number of vertices--two or more, or even 
just one. A hypergraph $H$ is a pair $H=(V,E)$, where $V$ is a set of vertices, and $E$ is a set of non-empty subsets of $V$, called hyperedges.
Steven Wolfram~\cite{wolfram2020} proposed some models that are some sort of graph rewriting that can be used to understand known fundamental 
features of physics. In general, the structures on which these models operate consist of collections of relations between identical, labeled,  
discrete elements. Wolfram proposed that these structures can be represented by hypergraphs. The elements are the nodes
of the hypergraph and the relations are the (directed) hyperedges that connect these elements. For example, the following graph
\begin{center}
\includegraphics[scale=1]{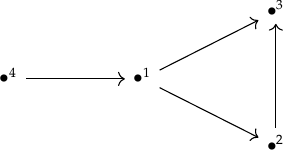}
\end{center}
corresponds to the following set of relations
\begin{displaymath}
\bigl\{ \{1, 2\}, \{1, 3\}, \{2, 3\}, \{4, 1\}\bigr\}.
\end{displaymath}
The order in which these relations are expressed is irrelevant; however, the order in which elements appear within each relation is considered 
significant (and is evident in the orientations of the hyperedges in the hypergraph). Wolfram's models are rules for rewriting rewriting collections 
of relations. A very simple rule is the following one
\begin{equation}\label{rule::1}
\{\{x, y\}\} \rightarrow \{\{x, y\}, \{y, z\}\},
\end{equation}
where $x$, $y$, and $z$ stand for any elements. Note that the following rule
\begin{equation}\label{rule::2}
\{\{x, y\}\} \rightarrow \{\{z, y\}, \{y, x\}\}
\end{equation}
is different from the previous one since order within each relation matters. Suppose we have the set of relations $\{\{1,2\}\}$ and
we apply rule~\ref{rule::2}, then we get the set of relations $\{\{7,2\},\{2,1\}\}$, where $7$ stands for a new node whose label
is free (i.e., not assigned to some other node). Now, let us apply the same rule~\ref{rule::2} to the new set of relations. The 
result will be the set $\{ \{8, 2\}, \{2, 7\}, \{9, 1\}, \{1, 2\}\}$. Ther graph that follows is a depiction of this resulting relation.
\begin{center}
\includegraphics[scale=1]{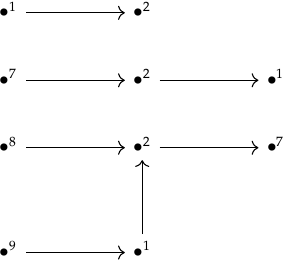}
\end{center}

Neutrosophic sets (sometimes called neutrosophic logic or neutrosophy) are a generalization of fuzzy sets and intuitionistic fuzzy sets, 
introduced by Florentin Smarandache~\cite{smarandache2006} in the late 1990s. Neutrosophic sets have used in computer science for the analysis 
of medical images~\cite{koundal2019}. In a neutrosophic set, each element $x$ is characterized by a triple of functions:
\begin{displaymath}
A(x)=\bigl(T(x),I(x),F(x)\bigr),
\end{displaymath}
where $T(x)$ is the degree of truth (membership), $I(x)$ is the degree of indeterminacy (uncertainty, unknown, inconsistent info), and
$F(x)$ is the degree of falsehood (non-membership). Since we are going to study graph rewriting of neutrosophic graphs we need to
provide a definition of these structures:
\begin{definition}
A {\em neutrosophic graph}~\cite{panda2025} is an ordinary graph (i.e., a set of vertices and a set of edges connecting
these vertices) accompanied by the following functions
\begin{enumerate}
\item a function  $T: V \rightarrow [0, 1]$ that represents truth;
\item a function $I: V \rightarrow [0, 1]$ that represents indeterminacy; and
\item a function $F: V \rightarrow [0, 1] $that represents falsity;
\end{enumerate}
such that $0\le T(v) + I(v) + F(v) \le 3$ for all $v\in V$. Clearly, these functions assign membership 
values to each vertex $v\in V$, representing the truth,
indeterminacy, and falsity degrees, respectively, of the proposition ``$v$ is a member of the graph $G$.''
Similarly, the membership values of an edge $(u, v)\in E$ are defined by the following three functions
\begin{enumerate}
\item a membership function $T: E \rightarrow [0, 1]$ that represents truth;
\item a membership function $I: E \rightarrow [0, 1]$ that represents indeterminacy; 
\item a membership function $F: E \rightarrow [0, 1]$ that represents falsity; 
\end{enumerate}
such that $0\le T(u, v) + I(u, v) + F(u, v) \le 3$ for all $(u, v) \in E$.
These  functions represent  the  truth,  indeterminacy,  and falsity degrees,  respectively,  of the
proposition  ``the edge $(u, v)$  is a member of the graph $G$.'' 
\end{definition}

{\bfseries\itshape Plan of the chapter}\hskip1em First we demonstrate how we can actually compute using graph rewriting, then we
present neutrosophic graph rewriting and define neutrosophic double pushout graph rewriting and neutrosophic hypergraph rewriting {\`a} la Wolfram.
\section{Graph Rewriting as a Model of Computation}
In this section we will describe a simple method to compute the successor of (positive) binary number, which is represented by a graph,
using graph rewriting. A binary number will be represented by a chain of nodes representing bits and the least significant bit  
will be at the end. There are two special nodes one at the beginning and one at the end of the graph. Also, the leftmost node that
containing a digit is the least significant bit. For example, the number $101$ ($5$ in decimal) might be represented as:
\begin{center}
\includegraphics[scale=1]{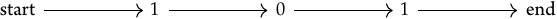}
\end{center}
The rules to compute the successor are:
\begin{enumerate}
\item If the current bit is 0, change it to 1 and stop (no carry needed).
\item If the current bit is 1, change it to 0 and propagate carry to the next bit.
\item If carry goes past the last bit, add a new 1 node.
\end{enumerate}
We can express these as rewrite patterns:
\begin{description}
\item[Rule 1] Increment a 0 bit
\begin{displaymath}
[0] \rightarrow \text{rest}     \Rightarrow    [1] \rightarrow \text{rest,}
\end{displaymath}
when no carry needed.
\item[Rule 2] Increment a 1 bit
\begin{displaymath}
[1] \rightarrow \text{rest}     \Rightarrow    [0] \rightarrow \text{(carry-to-rest),}
\end{displaymath}
propagate carry if necessary.
\item[Rule 3] Carry past the end
\begin{displaymath}
\text{(end with carry)}  \Rightarrow  [1] \rightarrow \text{end.}
\end{displaymath}
\end{description}
Let us apply these rules to number $110$  (binary 6) represented as:
\begin{center}

\includegraphics[scale=1]{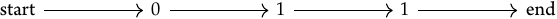}
\end{center}
In this case, we simply apply \textbf{Rule 1} and so $[0]$ becomes $[1]$ and we stop. Thus, we get 
\begin{center}
\includegraphics[scale=1]{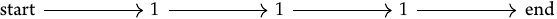}
\end{center}
which is the number $7$ in binary.

The previous example showed that we can actually compute with graph rewriting. However, this example
just verified that we can compute but it is not a proof that we can compute anything that is computable.
Max Dauchet~\cite{dauchet1992} has proved that it is possible to simulate Turing machines with graph rewriting.
Thus, graph rewriting is Turing-complete, which means that we can compute with it anything we can compute with
a Turing machine. 
\section{Neutrosophic Graph Rewriting}
Since neutrosophic graphs are a generalization of ordinary graphs, it makes sense to perform graph rewriting to such graphs.
However, authors of other works exploring graph rewriting under vagueness (e.g., see~\cite{mori1995}), consider only
fuzzy graphs whose vertices form a crisp set and not a fuzzy one. First, we will explore graph rewriting of a restricted 
version of neutrosophic graphs whose vertices do not have truth, falsity, and indeterminacy degrees. Then, 
we will briefly explain how we can categorically handle neutrosophic graphs whose set of vertices and edges are also neutrosophic sets.

In general, a fuzzy graph can be represented by a fuzzy relation~\cite{mori1995}. Thus, we can adopt this idea and we will
view neutrosophic graphs as binary neutrosophic relations:
\begin{definition}
A binary neutrosophic relation from a (crisp) set $X$ to a (crisp) $Y$ is a neutrosophic subset in $X\times Y$, that is, 
a structure $R$ given by
\begin{displaymath}
R = \bigl\{ \langle(x, y), \mu_{R}(x, y), \sigma_{R}(x, y), \nu_{R}(x, y)\rangle \mathrel{\bigm|} 
(x, y) \in X \times Y \bigr\},
\end{displaymath}
where $\mu_{R}: X\times Y \rightarrow [0, 1]$, $\sigma_{R}: X\times Y\rightarrow [0, 1]$, 
and $\nu_{R}: X\times Y \rightarrow [0, 1]$.
For any $(x, y)\in X\times Y$, the value $\mu_{R}(x, y)$ is called the degree of a membership of 
$(x, y)$ in $R$, $\sigma_{R}(x, y)$ is called the degree of indeterminacy of $(x, y)$ in $R$, 
and $\nu_{R}(x, y)$ is called the degree of non-membership of $(x, y)$ in $R$.
\end{definition} 
Using this definition we define a neutrosophic graph $G=(V,E)$ as a pair $(V,R)$ where $V$ is its set of vertices and $R$ is a 
binary neutrosophic relation from $V$ to $V$. Mizoguchi and Kawahara~\cite{mizoguchi1995} defined a {\em relational calculus} that
has been redefined for fuzzy relations by Mori and Kawahara~\cite{mori1995} and which we redefine for neutrosophic relations.
\begin{definition}
The (crisp) set of all neutrosophic relations from $A$ to $B$ will be denoted by $\mathbf{Rel}_{n}(A,B)$. This set contains
the relations $\mathbf{0}_{A,B}(a,b)=0$ and $\mathbf{U}_{A,B}(a,b)=1$, for all $(a,b)\in A\times B$. 
\end{definition}
Assume that $S,R\in\mathbf{Rel}_{n}(A,B)$, then $S\sqsubseteq R$ if and only if $\mu_{S}(a,b)\leq \mu_{R}(a,b)$, 
$\sigma_{S}(a,b)\leq \sigma_{R}(a,b)$, and $\nu_{S}(a,b)\geq \nu_{R}(a,b)$ for all $(a,b)\in A\times B$.
\begin{proposition}\label{poset::1}
The set $\mathbf{Rel}_{n}(A,B)$ equipped with the relation $\sqsubseteq$ is a poset. 
\end{proposition}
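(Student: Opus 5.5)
We verify the three poset axioms for $\sqsubseteq$ on $\mathbf{Rel}_{n}(A,B)$ directly from the definition. The ordering compares three real-valued functions pointwise. On each of $\mu$ and $\sigma$ it uses the usual order $\leq$ of $[0,1]$, and on $\nu$ it uses the reverse order $\geq$. Both $\leq$ and $\geq$ are partial (indeed total) orders on $[0,1]$. A pointwise product of partial orders is again a partial order. So the whole argument reduces to checking each axiom componentwise, at each fixed pair $(a,b)\in A\times B$.

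We treat a neutrosophic relation $R$ as the triple of functions $(\mu_R,\sigma_R,\nu_R)$. Two relations $R$ and $S$ are equal exactly when $\mu_R=\mu_S$, $\sigma_R=\sigma_S$ and $\nu_R=\nu_S$ as functions on $A\times B$. This is the natural reading of the set-of-tuples definition, since each pair $(x,y)$ occurs exactly once in $R$.

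The three axioms are then checked as follows.

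\textbf{Reflexivity.} For any $R$ and any $(a,b)$ we have $\mu_R(a,b)\leq\mu_R(a,b)$, $\sigma_R(a,b)\leq\sigma_R(a,b)$ and $\nu_R(a,b)\geq\nu_R(a,b)$. Hence $R\sqsubseteq R$.

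\textbf{Antisymmetry.} Suppose $S\sqsubseteq R$ and $R\sqsubseteq S$. Then at every $(a,b)$ we get $\mu_S(a,b)\leq\mu_R(a,b)\leq\mu_S(a,b)$, so $\mu_S(a,b)=\mu_R(a,b)$. In the same way $\sigma_S(a,b)=\sigma_R(a,b)$, and from $\nu_S(a,b)\geq\nu_R(a,b)\geq\nu_S(a,b)$ we get $\nu_S(a,b)=\nu_R(a,b)$. By the identification above, $S=R$.

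\textbf{Transitivity.} Suppose $Q\sqsubseteq S$ and $S\sqsubseteq R$. We chain the inequalities pointwise: $\mu_Q\leq\mu_S\leq\mu_R$, $\sigma_Q\leq\sigma_S\leq\sigma_R$, and $\nu_Q\geq\nu_S\geq\nu_R$. These give $Q\sqsubseteq R$.

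No step is a real obstacle. The only point needing care is the reversed direction on the non-membership component, which must not be mixed up with the other two. One should also note that antisymmetry relies on equality of relations being equality of all three component functions.
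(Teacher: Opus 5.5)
Your proof is correct and takes the same approach as the paper, which simply asserts that $\sqsubseteq$ is reflexive, transitive, and antisymmetric. You spell out the componentwise checks the paper calls ``clear'': the reversed order on $\nu$, and the fact that antisymmetry relies on equality of relations meaning equality of all three component functions.
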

\begin{proof}
The relation $\sqsubseteq$ is clearly reflexive, transitive, and  anti-symmetric.  Ergo, $\mathbf{Rel}_{n}(A,B)$ equipped with the relation $\sqsubseteq$
is a poset. 
\end{proof}

Given two relations $S,R\in\mathbf{Rel}_{n}(A,B)$, we define the operations $S\sqcap R$ and $S\sqcup R$ as follows:
\begin{align*}
\mu_{S\sqcap R}(a,b)&=\min\bigl\{ \mu_{S}(a,b),\mu_{R}(a,b)\bigr\}, & \mu_{S\sqcup R}(a,b)&=\max\bigl\{ \mu_{S}(a,b),\mu_{R}(a,b)\bigr\}\\
\sigma_{S\sqcap R}(a,b)&=\min\bigl\{ \sigma_{S}(a,b),\sigma_{R}(a,b)\bigr\}, & \sigma_{S\sqcup R}(a,b)&=\max\bigl\{ \sigma_{S}(a,b),\sigma_{R}(a,b)\bigr\}\\
\nu_{S\sqcap R}(a,b)&=\max\bigl\{ \nu_{S}(a,b),\nu_{R}(a,b)\bigr\}, & \nu_{S\sqcup R}(a,b)&=\min\bigl\{ \nu_{S}(a,b),\nu_{R}(a,b)\bigr\}\\
\end{align*}
When the operants of ``$\sqcap$'' and ``$\sqcup$'' are not finite, we have to replace the operator ``$\min$'' with the operator ``$\inf$'' 
and the operator ``$\max$'' with the operator ``$\sup$,'' respectively. 
\begin{proposition}\label{poset::2}
The quadruple $\bigl(\mathbf{Rel}_{n}(A,B),\sqsubseteq,\sqcap,\sqcup\bigr)$ is a lattice with a zero element.
\end{proposition}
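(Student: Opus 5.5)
By Proposition~\ref{poset::1}, $\bigl(\mathbf{Rel}_{n}(A,B),\sqsubseteq\bigr)$ is already a poset. It therefore suffices to show two things: that $S\sqcap R$ is the greatest lower bound and $S\sqcup R$ the least upper bound of $S$ and $R$ with respect to $\sqsubseteq$, and that some element lies below every relation. The strategy is to reduce everything to the pointwise order on the three coordinates. Via $(\mu,\sigma,\nu)\mapsto(\mu,\sigma,1-\nu)$, the order $\sqsubseteq$ on a single pair $(a,b)$ is simply the product order on $[0,1]\times[0,1]\times[0,1]^{\mathrm{op}}$. The whole of $\mathbf{Rel}_{n}(A,B)$ is then the product of these over $A\times B$. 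Since a product of lattices is a lattice with componentwise meets and joins, the claim follows once the componentwise operations are matched with the definitions of $\sqcap$ and $\sqcup$. I would, however, write the verification out directly rather than cite this general fact.

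First, $S\sqcap R$ is a lower bound. At each $(a,b)$ we have $\min\{\mu_S,\mu_R\}\le\mu_S$, $\min\{\sigma_S,\sigma_R\}\le\sigma_S$, and $\max\{\nu_S,\nu_R\}\ge\nu_S$, so $S\sqcap R\sqsubseteq S$. The same argument gives $S\sqcap R\sqsubseteq R$.

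Second, $S\sqcap R$ is the greatest lower bound. If $Q\sqsubseteq S$ and $Q\sqsubseteq R$, then at each $(a,b)$ the value $\mu_Q$ is at most both $\mu_S$ and $\mu_R$, hence at most their minimum. Likewise $\sigma_Q$ is at most the minimum of $\sigma_S$ and $\sigma_R$, and $\nu_Q$ is at least both $\nu_S$ and $\nu_R$, hence at least their maximum. So $Q\sqsubseteq S\sqcap R$. The argument for $\sqcup$ is dual, with $\max$ and $\min$ exchanged.

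The only point needing care is the reversed $\nu$-coordinate. The definitions pair $\min$ on $\mu,\sigma$ with $\max$ on $\nu$ for $\sqcap$, and the reverse for $\sqcup$, so they are consistent with the reversed order; I would check this explicitly. Both operations also stay inside $[0,1]$, so $\mathbf{Rel}_{n}(A,B)$ is closed under them.

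For the zero element, the natural candidate is the relation with $\mu\equiv0$, $\sigma\equiv0$, $\nu\equiv1$. This is what $\mathbf{0}_{A,B}$ should mean; read literally as "all coordinates $0$", it would not be least, because the $\nu$-coordinate is reversed. Indeed, for any $R$ we have $0\le\mu_R$, $0\le\sigma_R$, and $1\ge\nu_R$ pointwise, so this relation lies below $R$. It is therefore the least element, and it is the identity for $\sqcup$ and absorbing for $\sqcap$.

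The main obstacle is precisely this notational issue: the paper's $\mathbf{0}_{A,B}(a,b)=0$ does not specify the falsity component. I would state explicitly the convention that $\mathbf{0}_{A,B}$ has falsity degree $1$, and dually that $\mathbf{U}_{A,B}$ has falsity degree $0$, which then gives a top element as well. Everything else is a routine pointwise computation. The remark about $\inf$ and $\sup$ is not needed here, since only binary meets and joins are involved, although the same argument would show completeness.
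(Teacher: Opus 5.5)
Your proof is correct and follows the same outline as the paper's: a poset from Proposition~\ref{poset::1}, plus the existence of binary meets and joins. It is, however, more complete than the paper's argument in two places. The paper only asserts that $\sqcap$ and $\sqcup$ are ``defined'', without checking that they are the greatest lower and least upper bounds for the order with reversed $\nu$-coordinate, and it never addresses the zero element at all. Your choice of bottom, $\mu\equiv0$, $\sigma\equiv0$, $\nu\equiv1$, is the right one; the literal all-zero reading of $\mathbf{0}_{A,B}$ is not least, and your convention agrees with the one implicit in the paper's later definition of regular relations.
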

\begin{proof}
Any poset is a lattice if and only if all operations $S_1\sqcap S_2$ and $S_1\sqcup S_2$ are defined. Obviously for any $S_1, S_2\in\mathbf{Rel}_{n}(A,B)$,
these operations are defined, therefore  $\bigl(\mathbf{Rel}_{n}(A,B),\sqsubseteq,\sqcap,\sqcup\bigr)$ is a lattice.
\end{proof}
Now, let us define the relation $S\Rightarrow R$:
\begin{align*}
\mu_{S\Rightarrow R}(a,b) &= \left.
  \begin{cases}
    1,      & \text{if } \mu_{S}(a,b)\le \mu_{R}(a,b)\\
    \mu_{R}(a,b), & \text{if } \mu_{S}(a,b)> \mu_{R}(a,b)
  \end{cases}\right.\\
\sigma_{S\Rightarrow R}(a,b) &= \left.
  \begin{cases}
    1,      & \text{if } \sigma_{S}(a,b)\le \sigma_{R}(a,b)\\
    \sigma_{R}(a,b), & \text{if } \sigma_{S}(a,b)> \sigma_{R}(a,b)
  \end{cases}\right.\\
\nu_{S\Rightarrow R}(a,b) &= \left.
  \begin{cases}
    0,      & \text{if } \nu_{S}(a,b)\ge \nu_{R}(a,b)\\
   \nu_{S}(a,b), & \text{if } \nu_{S}(a,b)< \nu_{R}(a,b)
  \end{cases}\right.\\
\end{align*}
It is clear that $S\Rightarrow R\in\mathbf{Rel}_{n}(A,B)$ and the operator ``$\Rightarrow$'' is used to define the relative pseudo-complement 
in the lattice $\mathbf{Rel}_{n}(A,B)$.
\begin{proposition}
Suppose that $R, S\in\mathbf{Rel}_{n}(A,B)$. Then, for any $Q\in\mathbf{Rel}_{n}(A,B)$, it it true that
\begin{displaymath}
R\sqcap Q\sqsubseteq S\quad\text{if and only if}\quad Q\sqsubseteq R\Rightarrow S.
\end{displaymath}
\end{proposition}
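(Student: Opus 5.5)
The plan is to reduce the statement to a pointwise check, one component at a time. Both $\sqcap$ and $\sqsubseteq$ act independently at each pair $(a,b)\in A\times B$ and on each of the three components $\mu,\sigma,\nu$. So $R\sqcap Q\sqsubseteq S$ holds iff, for every $(a,b)$,
\begin{displaymath}
\min\{\mu_R,\mu_Q\}\le\mu_S,\qquad \min\{\sigma_R,\sigma_Q\}\le\sigma_S,\qquad \max\{\nu_R,\nu_Q\}\ge\nu_S
\end{displaymath}
(all evaluated at $(a,b)$). Likewise $Q\sqsubseteq R\Rightarrow S$ splits into three pointwise inequalities. It therefore suffices to prove, for each component separately, an equivalence between two statements about real numbers in $[0,1]$.

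For the $\mu$-component (and identically for $\sigma$) I split into the two cases of the definition of $\Rightarrow$. If $\mu_R\le\mu_S$, then $\min\{\mu_R,\mu_Q\}\le\mu_S$ holds automatically, and $\mu_Q\le 1=\mu_{R\Rightarrow S}$ also holds, so both sides are true. If $\mu_R>\mu_S$, then $\min\{\mu_R,\mu_Q\}\le\mu_S$ forces $\mu_Q\le\mu_S$, and conversely $\mu_Q\le\mu_S$ gives the minimum $\le\mu_S$. Since $\mu_{R\Rightarrow S}=\mu_S$ in this case, the two sides agree. This is just the standard Gödel residuum on $[0,1]$.

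The $\nu$-component is the step needing care, because the order is reversed there. I would dualise the previous case analysis. The condition $\max\{\nu_R,\nu_Q\}\ge\nu_S$ means: either $\nu_R\ge\nu_S$, or $\nu_Q\ge\nu_S$. So the residual value must be $0$ when $\nu_R\ge\nu_S$, and $\nu_S$ (the \emph{second} argument) when $\nu_R<\nu_S$. Then $\nu_Q\ge\nu_{R\Rightarrow S}$ matches exactly in both cases.

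This is the main obstacle. Read literally, the displayed definition of $S\Rightarrow R$ puts $\nu_S$, the first argument, in the second case, which would make $\nu_{R\Rightarrow S}=\nu_R$. With that reading the equivalence fails. For example, take $\nu_R=0.2$, $\nu_S=0.5$, $\nu_Q=0.3$ at some point, with the $\mu$ and $\sigma$ components chosen trivially. Then $\nu_Q\ge\nu_R$, but $\max\{\nu_R,\nu_Q\}=0.3<0.5=\nu_S$.

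So in the proof I would state explicitly that the $\nu$-clause is to be read as $\nu_{S\Rightarrow R}(a,b)=\nu_R(a,b)$ when $\nu_S(a,b)<\nu_R(a,b)$, mirroring the $\mu$ and $\sigma$ clauses. With that reading, I would conclude by combining the three componentwise equivalences over all $(a,b)$.
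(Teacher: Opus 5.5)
Your proof is correct, and your objection to the $\nu$-clause is well founded. As printed, the definition gives $\nu_{R\Rightarrow S}=\nu_R$ whenever $\nu_R<\nu_S$. Your example ($\nu_R=0.2$, $\nu_S=0.5$, $\nu_Q=0.3$, with trivial $\mu$ and $\sigma$) then yields $Q\sqsubseteq R\Rightarrow S$ while $R\sqcap Q\sqsubseteq S$ fails. So the clause must return $\nu_S$, the order-dual of the G\"odel residuum, exactly as you say.

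Your route differs from the paper's. The paper gets the direction from $Q\sqsubseteq R\Rightarrow S$ to $R\sqcap Q\sqsubseteq S$ algebraically, from the identity $R\sqcap(R\Rightarrow S)=R\sqcap S$ plus monotonicity of $\sqcap$. It proves the converse by splitting into two cases: either $\mu_R\le\mu_Q$, $\sigma_R\le\sigma_Q$, $\nu_R\ge\nu_Q$ all hold, or all three are strictly reversed.

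Your single componentwise sweep buys two things. First, your case split is exhaustive. The paper's is not, since failure of the three conditions only means at least one fails; this is harmless only once one argues component by component, as you do. Second, your approach exposes the definitional slip on which the paper's proof silently depends. Under the printed clause, the identity $R\sqcap(R\Rightarrow S)=R\sqcap S$ is false in the $\nu$-component when $\nu_R<\nu_S$: it gives $\max\{\nu_R,\nu_R\}=\nu_R$ instead of $\max\{\nu_R,\nu_S\}=\nu_S$. Only that direction is affected. The printed $R\Rightarrow S$ lies above the corrected one in $\sqsubseteq$, so the converse survives under either reading.

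What the paper's organisation offers, once repaired, is that it isolates the standard Heyting-algebra law $R\sqcap(R\Rightarrow S)=R\sqcap S$ and turns one direction into a one-line lattice argument. Verifying that law, however, requires precisely the componentwise check you carry out.
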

\begin{proof}
Suppose that $Q\sqsubseteq R\Rightarrow S$. Then, this means that
\begin{align*}
\mu_{Q}(a,b) &\le \mu_{R\Rightarrow S}(a,b)\\
\sigma_{Q}(a,b) &\le \sigma_{R\Rightarrow S}(a,b)\\
\nu_{Q}(a,b) &\ge \nu_{R\Rightarrow S}(a,b)\\
\end{align*}
for any pair $(a,b)$.
\begin{align*}
\mu_{R\sqcap(R\Rightarrow S)}(a,b) &= \left.
  \begin{cases}
    R(a,b),      & \text{if } \mu_{R}(a,b)\le \mu_{S}(a,b)\\
    \mu_{R\sqcap S}(a,b), & \text{if } \mu_{R}(a,b)> \mu_{S}(a,b)
  \end{cases}\right.\\
\sigma_{R\sqcap(R\Rightarrow S)}(a,b) &= \left.
  \begin{cases}
    R(a,b),      & \text{if } \sigma_{R}(a,b)\le \sigma_{S}(a,b)\\
    \sigma_{R\sqcap S}(a,b), & \text{if } \sigma_{R}(a,b)> \sigma_{S}(a,b)
  \end{cases}\right.\\
\nu_{R\sqcap(R\Rightarrow S)}(a,b) &= \left.
  \begin{cases}
    R(a,b),                  & \text{if } \sigma_{R}(a,b)\ge \sigma_{S}(a,b)\\
    \sigma_{R\sqcap S}(a,b), & \text{if } \sigma_{R}(a,b)< \sigma_{S}(a,b)
  \end{cases}\right.\\
\end{align*}
This means that $R\sqcap(R\Rightarrow S)=R\sqcap S$. We have assumed that 
$R\sqcap Q\sqsubseteq R\sqcap(R\Rightarrow S)=R\sqcap S \sqsubseteq S$. Conversely,
let us suppose that $R\sqcap Q\sqsubseteq S$. If 
\begin{align*}
\mu_{R}(a,b) &\le \mu_{Q}(a,b),\\
\sigma_{R}(a,b) &\le \sigma_{Q}(a,b),\\
\intertext{and}
\nu_{R}(a,b) &\ge \nu_{Q}(a,b),\\
\end{align*}
then 
\begin{align*}
\mu_{R}(a,b)    &= \mu_{R\sqcap Q}(a,b)   \le \mu_{S}(a,b),\\
\sigma_{R}(a,b) &= \sigma_{R\sqcap Q}(a,b) \le \sigma_{S}(a,b),\\
\intertext{and}
\nu_{R}(a,b)    &= \nu_{R\sqcap Q}(a,b)   \ge \nu_{S}(a,b).\\
\end{align*}
This means that
\begin{align*}
\mu_{R\Rightarrow S}(a,b)    &= 1,\\
\sigma_{R\Rightarrow S}(a,b) &=1,\\
\intertext{and}
\nu_{R\Rightarrow S}(a,b) &=0.\\
\end{align*} 
Therefore it holds that
\begin{align*}
\mu_{Q}(a,b)    &\le \mu_{R\Rightarrow S}(a,b),\\
\sigma_{Q}(a,b) &\le \sigma_{R\Rightarrow S}(a,b),\\
\intertext{and}
\nu_{Q}(a,b)    &\ge \nu_{R\Rightarrow S}(a,b).\\
\end{align*}
Otherwise, it should be true that 
\begin{align*}
\mu_{R}(a,b) &> \mu_{Q}(a,b),\\
\sigma_{R}(a,b) &> \sigma_{Q}(a,b),\\
\intertext{and}
\nu_{R}(a,b) &< \nu_{Q}(a,b),\\
\end{align*}
but we have assumed that 
\begin{align*}
\mu_{Q}(a,b)    &= \mu_{R\sqcap Q}(a,b)   \le \mu_{S}(a,b) \le \mu_{R\Rightarrow S}(a,b),\\
\sigma_{Q}(a,b) &= \sigma_{R\sqcap Q}(a,b)   \le \sigma_{S}(a,b) \le \sigma_{R\Rightarrow S}(a,b),\\
\intertext{and}
\nu_{Q}(a,b)    &= \nu_{R\sqcap Q}(a,b)   \ge \nu_{S}(a,b) \ge \nu_{R\Rightarrow S}(a,b).\\
\end{align*}
Hence, $Q\sqsubseteq(R\Rightarrow S)$.
\end{proof}
Usually, we denote the relatively pseudo-complement $T\Rightarrow\mathbf{0}$ by $\neg T$. The following result can be
proved using the previous results.
\begin{theorem}
For any sets $A$ and $B$, the quadruple $\bigl(\mathbf{Rel}_{n}(A,B),\sqsubseteq,\sqcap,\sqcup\bigr)$ is a Heyting algebra.
\end{theorem}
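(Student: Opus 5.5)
The plan is to verify the definition of a Heyting algebra directly: a lattice with a least and a greatest element in which every pair $R,S$ has a relative pseudo-complement $R\Rightarrow S$. Proposition~\ref{poset::2} already gives the lattice structure. The preceding proposition gives the adjunction $R\sqcap Q\sqsubseteq S \iff Q\sqsubseteq R\Rightarrow S$. What remains is to exhibit the bounds and to make sure the adjunction really holds for the operator $\Rightarrow$ as written. Distributivity need not be checked separately, since every lattice with relative pseudo-complements is distributive. Even so, it is easy to see directly here, because $\sqcap$ and $\sqcup$ act pointwise by $\min$/$\max$ on chains.

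To organise the argument cleanly, I would make the structure explicit. For each pair $(a,b)\in A\times B$, a relation contributes a triple $(\mu,\sigma,\nu)$ in $L=[0,1]\times[0,1]\times[0,1]^{\mathrm{op}}$, where $[0,1]^{\mathrm{op}}$ carries the reversed order. This matches the definition of $\sqsubseteq$, in which $\nu$ is compared with $\ge$. Then $\mathbf{Rel}_{n}(A,B)\cong L^{A\times B}$, with $\sqsubseteq$, $\sqcap$ and $\sqcup$ computed componentwise. Each factor is a complete chain, and every complete chain is a Heyting algebra with
\begin{displaymath}
x\to y = \begin{cases}\top, & x\le y,\\ y, & x>y.\end{cases}
\end{displaymath}
Products of Heyting algebras are Heyting algebras, with all operations taken componentwise. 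Hence $L$, and then $L^{A\times B}$, is a Heyting algebra. This also covers the infinite $\inf$/$\sup$ remark, since $[0,1]$ is complete.

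The bounds come out of this picture. The top element is $\mathbf{U}_{A,B}$, read as $\mu=\sigma=1$ and $\nu=0$. The bottom element is the relation with $\mu=\sigma=0$ and $\nu=1$. I would state explicitly that this is the ``zero'' of Proposition~\ref{poset::2}, interpreting $\mathbf{0}_{A,B}$ with $\nu_{\mathbf{0}}=1$ so that it is least for $\sqsubseteq$.

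The main obstacle is reconciling the paper's $\nu$-clause of $\Rightarrow$ with the chain formula in $[0,1]^{\mathrm{op}}$. There the implication should be $\top=0$ when $\nu_S\ge\nu_R$, and $\nu_R$ (the second argument) when $\nu_S<\nu_R$. I would verify the adjunction pointwise in the $\nu$-coordinate using this form, checking the two cases $\nu_R\le\nu_S$ and $\nu_R>\nu_S$ as in the proof of the previous proposition. I would then note that this is the reading under which that proposition holds, which amounts to replacing $\nu_S$ by $\nu_R$ in the second case. With that settled, $\neg T = T\Rightarrow\mathbf{0}$ is the pseudo-complement, and the theorem follows.
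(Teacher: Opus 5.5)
Your proof is correct, but it takes a different route from the paper. The paper gives no argument beyond saying the theorem follows from the earlier results: the lattice proposition, plus the proposition that its explicitly defined $\Rightarrow$ satisfies $R\sqcap Q\sqsubseteq S \iff Q\sqsubseteq R\Rightarrow S$. You instead identify $\mathbf{Rel}_{n}(A,B)$ with $L^{A\times B}$ for $L=[0,1]\times[0,1]\times[0,1]^{\mathrm{op}}$. The Heyting structure then comes from two standard facts: a bounded chain is Heyting with the G\"odel implication, and arbitrary products of Heyting algebras are Heyting with componentwise operations.

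What your route buys is independence from the paper's formula for $\Rightarrow$. That formula is wrong in the $\nu$-coordinate, exactly as you suspect. Take a single pair with $\mu=\sigma=0$ throughout and $\nu_R=0.2$, $\nu_S=0.8$, $\nu_Q=0.5$. The paper's $R\Rightarrow S$ has $\nu=0.2$, so $Q\sqsubseteq R\Rightarrow S$. But $\nu_{R\sqcap Q}=0.5<0.8=\nu_S$, so $R\sqcap Q\not\sqsubseteq S$. The paper's own check of $R\sqcap(R\Rightarrow S)=R\sqcap S$ writes $\sigma$ in the $\nu$-line, which hides this failure.

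Your repair is the right one: return the second argument, as the chain implication in $[0,1]^{\mathrm{op}}$ dictates. Your reading of the bounds as $(1,1,0)$ and $(0,0,1)$ is also right. It is what $\sqsubseteq$ forces and what the later definition of a regular relation uses; read literally, $\mathbf{0}_{A,B}$ and $\mathbf{U}_{A,B}$ are not the bounds.

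Being a Heyting algebra depends only on the order, so the theorem itself survives the paper's slip. The paper's route becomes a one-line corollary once the preceding proposition is repaired. Yours proves the statement directly and also explains why the implication must have the form it does.

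One small point: when you write out the check for $R\Rightarrow S$, the case split is $\nu_R\ge\nu_S$ versus $\nu_R<\nu_S$. The split you state uses the variable roles of the definition of $S\Rightarrow R$.
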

\begin{definition}
A neutrosophic relation $R\in \mathbf{Rel}_{n}(A,B)$ is {\em regular} if and only if
$\mu_R(a,b)=0$, $\sigma_R(a,b)=0$, and  $\nu_R(a,b)=1$ or $\mu_R(a,b)=1$, $\sigma_R(a,b)=1$, and $\nu_R(a,b)=0$.
\end{definition}
\begin{definition}
Assume that $R, S\in \mathbf{Rel}_{n}(A,B)$ and $R\sqsubseteq S$. Then, $R$ is {\em regular with respect to} $S$ 
if and only if $\mu_{R}(a,b)\neq 0$ implies $\mu_{R}(a,b)=\mu_{S}(a,b)$, $\sigma_{R}(a,b)\neq 0$ implies $\sigma_{R}(a,b)=\sigma_{S}(a,b)$,
and $\nu_{R}(a,b)\neq 1$ implies $\nu_{R}(a,b)=\nu_{S}(a,b)$.
\end{definition}
\begin{definition}
Assume that $0<\epsilon_1<1$, $0<\epsilon_2<1$, and $0<\epsilon_3<1$. Then, a neutrosophic relation $R$ is $\varepsilon$-regular 
with respect to $T$ if for all $a$ and $b$
\begin{align*}
\mu_{R}(a,b)\neq 0 &\Rightarrow \mu_{T}(a,b)\neq0\;\text{and}\;|\mu_{R}(a,b)-\mu_{T}(a,b)|\le\epsilon_1,\\
\sigma_{R}(a,b)\neq 0 &\Rightarrow \sigma_{T}(a,b)\neq0\;\text{and}\;|\mu_{R}(a,b)-\mu_{T}(a,b)|\le\epsilon_2,\;\text{and}\\
\nu_{R}(a,b)\neq 0 &\Rightarrow \nu_{T}(a,b)\neq0\;\text{and}\;|\mu_{R}(a,b)-\mu_{T}(a,b)|\le 1-\epsilon_3.\\
\end{align*}
\end{definition}
\begin{definition}
Given a neutrosophic relation $R:A\rightarrow B$, the domain $d(R):A\rightarrow A$ of $R$ is a relation defined by
\begin{displaymath}
d(R) = R\circ R^{-1}\circ\mathrm{id}_{A}, 
\end{displaymath}
where $R^{-1}$ is the {\em inverse} relation of $R$, $R\circ S$ is the {\em composite} relation of $R$ and $S$, and $\mathrm{id}_{A}$ is the 
{\em identity} relation of $A$ (the diagonal set of A).
\end{definition}
\begin{definition}
Given two neutrosophic graphs $(V_1,R)$ and $(V_2,S)$, a {\em partial} graph morphism from $(V_1,R)$ to $(V_2,S)$ is a 
regular-partial\footnote{Recall that any function is a relation and a (crisp) relation is a neutrosophic relation\ldots} 
function $f:V_1\rightarrow V_2$ satisfying $d(f)\circ R \circ f\sqsubseteq f\circ S$.
\end{definition} 
It is not difficult to show that all neutrosophic graphs and the partial graph morphisms between them form a category,
which will be denoted by $\mathbf{Pfn}(\mathbf{NGraph})$. Using the same arguments presented in~\cite{mizoguchi1995}, we
can prove the following:
\begin{theorem}
The category $\mathbf{Pfn}(\mathbf{NGraph})$ has pushouts.
\end{theorem}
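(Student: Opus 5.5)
We follow the construction of Mizoguchi and Kawahara for $\mathbf{Pfn}(\mathbf{Graph})$ and its fuzzy version by Mori and Kawahara. We work in the relational calculus of $\mathbf{Rel}_{n}$, using the Heyting algebra structure of each $\mathbf{Rel}_{n}(A,B)$ from the previous theorem. The plan has three steps. First, build the pushout of the underlying vertex sets in the category of sets and partial functions. Second, equip it with the least neutrosophic relation that makes the injections graph morphisms. Third, verify the universal property by reducing it to the crisp case plus a lattice inequality.

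Let $f:(V_0,R_0)\rightarrow(V_1,R_1)$ and $g:(V_0,R_0)\rightarrow(V_2,R_2)$ be partial graph morphisms.

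\textbf{Vertex level.} The pushout in the category of sets and partial functions is the standard one. Let $D_0\subseteq V_0$ be the largest subset such that every element related by the equivalence generated by $f$ and $g$ to an element of $D_0$ lies in $d(f)\cap d(g)$ (the domains of definition of $f$ and $g$). Delete from $V_1$ and $V_2$ the images of the elements of $V_0\setminus D_0$. Take the disjoint union of what remains and quotient it by the equivalence generated by $f(x)\sim g(x)$ for $x\in D_0$. Call the result $V_3$, with partial injections $\alpha:V_1\rightarrow V_3$ and $\beta:V_2\rightarrow V_3$. These are crisp, hence regular, partial functions, and they satisfy $\alpha\circ f=\beta\circ g$ as partial functions. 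This part is exactly as in the crisp case.

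\textbf{Relation level.} Define
\begin{displaymath}
R_3=\alpha^{-1}\circ R_1\circ\alpha\;\sqcup\;\beta^{-1}\circ R_2\circ\beta,
\end{displaymath}
where composition is the sup--min composition on $\mu$ and $\sigma$ and the inf--max composition on $\nu$, with converse as usual. We must check three things. First, $R_3$ is a well-defined neutrosophic relation; the bound $T+I+F\le 3$ is automatic. Second, $d(\alpha)\circ R_1\circ\alpha\sqsubseteq\alpha\circ R_3$, which holds because $\alpha$ is a partial function and so $d(\alpha)\sqsubseteq\alpha\circ\alpha^{-1}$; the same argument works for $\beta$. Third, the square commutes, which is inherited from the vertex level.

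\textbf{Universal property.} Suppose $(V,S)$ is a graph with morphisms $\alpha':V_1\rightarrow V$ and $\beta':V_2\rightarrow V$ such that $\alpha'\circ f=\beta'\circ g$. The crisp universal property gives a unique partial function $\theta:V_3\rightarrow V$ with $\theta\circ\alpha=\alpha'$ and $\theta\circ\beta=\beta'$. It remains to show that $\theta$ is a graph morphism, i.e. $d(\theta)\circ R_3\circ\theta\sqsubseteq\theta\circ S$. Composition distributes over $\sqcup$; this follows from the residuation proved in the proposition on relative pseudo-complements, and for $\nu$ it is the dual inf/max statement. So it suffices to treat each summand separately, for example
\begin{displaymath}
d(\theta)\circ\alpha^{-1}\circ R_1\circ\alpha\circ\theta\sqsubseteq\theta\circ S.
\end{displaymath}
Because $\alpha$ is surjective onto its part of $V_3$ and $d(\theta)\circ\alpha^{-1}\sqsubseteq\alpha^{-1}\circ d(\alpha')$, this reduces to $\alpha^{-1}\circ d(\alpha')\circ R_1\circ\alpha'\sqsubseteq\alpha^{-1}\circ\alpha'\circ S$. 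That inequality follows from $\alpha'$ being a morphism. Uniqueness of $\theta$ is inherited from the crisp case, since morphisms are determined by their underlying partial functions.

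\textbf{Main obstacle.} The hard step is this last inequality in the non-membership component. There the order is reversed, composition is inf--max, and the identities $d(\theta)\circ\alpha^{-1}\sqsubseteq\alpha^{-1}\circ d(\alpha')$ must be rechecked with $\ge$. I would handle it by observing that for crisp partial functions the $\nu$-component of any composite with a function is just a reindexing of $\nu$-values, with value $1$ outside the domain. The argument then becomes a pointwise comparison of indices, identical in shape to the $\mu$ case. If the mixed order turns out problematic, the fallback is to replace $\nu$ by $1-\nu$. This converts every neutrosophic relation into a triple of fuzzy relations ordered componentwise by $\le$, and under this change the $\nu$-component of the $\sqsubseteq$ inequality is exactly a Mori--Kawahara fuzzy pushout inequality.
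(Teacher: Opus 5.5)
Your proposal is correct and takes the route the paper indicates. The paper only says it transfers Mizoguchi and Kawahara's relational argument, and your three steps carry that out: the crisp pushout in partial functions, then the least relation $\alpha^{-1}\circ R_1\circ\alpha\sqcup\beta^{-1}\circ R_2\circ\beta$, then the mediating map and its uniqueness inherited from the crisp case, with your $\nu\mapsto 1-\nu$ remark showing why the transfer works componentwise. Two justifications need correcting: distributivity of composition over $\sqcup$ follows pointwise from $\min$ and $\max$ distributing over each other on $[0,1]$ (the residuation proposition concerns $\sqcap$ inside a single hom-set, not composition), and your implicit last step $\alpha^{-1}\circ\alpha'\circ S\sqsubseteq\theta\circ S$ rests on univalence $\alpha^{-1}\circ\alpha\sqsubseteq\mathrm{id}$, not on surjectivity of $\alpha$.
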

In order to apply a rewriting rule to a graph we need a mechanism that will check whether the rule can be applied to a
subgraph. This mechanism is called matching. Let us now formalize rewriting rules and matching:
\begin{definition}
A rewriting rule is a triple $p=\bigl((V_1,R), (V_2,S), f:V_1\rightarrow V_2\bigr)$, where $f$ is a partial function.
\end{definition}
\begin{definition}
An arrow $g$ from $(V,R)$ to $(G,\Xi)$ is a {\em matching} from $(V,R)$ to $(G,\Xi)$ if $g^{-1}\circ R\circ g$ is regular with
respect to $\Xi$.
\end{definition}

In the crisp case, we can construct the pushout from the rewriting rule and matching. For example, the following pushout square
exists in category $\mathbf{Pfn}$ of sets and partial functions between them. 
\begin{center}
\includegraphics[scale=1]{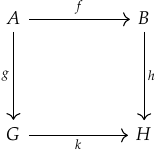}
\end{center}
The graph $(G,\Xi)$ is transformed into $(H,\Lambda)$ by applying a rewriting rule $p$ together with a matching $g$ if 
$\Lambda=k^{-1}\circ T\circ k\sqcup h^{-1}\circ(\Xi-g^{-1}\circ R\circ g)\circ k$. It is customary to denote the previous
rewriting square by
$(G,\Xi)\mathrel{\Longrightarrow_{p/g}}(H,\Lambda)$. The application of a rewriting rule can be viewed as a rewriting square:
\begin{center}
\includegraphics[scale=1]{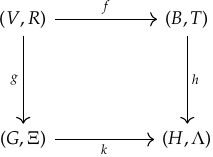}
\end{center}
Mori and Kawahara~\cite{mori1995} have shown that the rewriting square is a pushout square in $\mathbf{Pfn}(\mathbf{FGraph})$,
the category of fuzzy graphs and partial morphisms between them. Thus, we can prove the following theorem for category 
$\mathbf{Pfn}(\mathbf{NGraph})$.
\begin{theorem}
Suppose that $p$ is a rewriting rule $\bigl((V,R),(W,S), f:V\rightarrow W\bigr)$ and $g:(V,R)\rightarrow(W,S)$ is a matching.
If $f:(V,R)\rightarrow(W,S)$ is a partial morphism, then the rewriting square $(G,\Xi)\mathrel{\Longrightarrow_{p/g}}(H,\Lambda)$
is a pushout in $\mathbf{Pfn}(\mathbf{NGraph})$.
\end{theorem}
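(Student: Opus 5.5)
The plan is to transfer the Mori--Kawahara argument for $\mathbf{Pfn}(\mathbf{FGraph})$ componentwise to the three degrees $\mu,\sigma,\nu$. First I form the crisp pushout in $\mathbf{Pfn}$ of the underlying partial functions $f:V\rightarrow W$ and $g:V\rightarrow G$. This yields a set $H$ with partial functions $h:G\rightarrow H$ and $k:W\rightarrow H$ such that $k\circ f = h\circ g$ on the appropriate domains. In the relational calculus this is expressed by the identities $h^{-1}\circ h\sqcup k^{-1}\circ k=\mathrm{id}_H$ together with the usual domain conditions. These identities are crisp, so they hold verbatim in $\mathbf{Rel}_n$: a crisp relation is a neutrosophic relation with $\mu=\sigma\in\{0,1\}$ and $\nu=1-\mu$. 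With this in hand I define $\Lambda$ by the formula given above. Here I read $T$ as the rule's right-hand relation $S$, and $\Xi-g^{-1}\circ R\circ g$ as $\Xi\sqcap\neg(g^{-1}\circ R\circ g)$ in the Heyting algebra of the preceding theorem.

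The second step is to check that $h$ and $k$ are partial morphisms, i.e.\ $d(h)\circ\Xi\circ h\sqsubseteq h\circ\Lambda$ and $d(k)\circ S\circ k\sqsubseteq k\circ\Lambda$. The $k$ case is immediate from $k\circ k^{-1}\circ S\circ k\sqsubseteq \ldots$ and the first summand of $\Lambda$. For $h$, I split $\Xi$ as $(\Xi\sqcap g^{-1}Rg)\sqcup(\Xi-g^{-1}Rg)$. The second part is covered directly by the second summand of $\Lambda$. For the first part I use the matching hypothesis: $g^{-1}\circ R\circ g$ is regular with respect to $\Xi$, so on its support the degrees coincide with those of $\Xi$. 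Then $f$ being a partial morphism gives $d(f)\circ R\circ f\sqsubseteq f\circ S$, and commutativity transfers this into $k^{-1}\circ S\circ k$. I check this in each of the three coordinates, with the order reversed for $\nu$.

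The third step is universality. Given a graph $(X,\Theta)$ and partial morphisms $h':G\rightarrow X$, $k':W\rightarrow X$ with $k'\circ f=h'\circ g$, the crisp pushout already provides a unique partial function $\theta:H\rightarrow X$, namely $\theta=h^{-1}\circ h'\sqcup k^{-1}\circ k'$. Uniqueness in $\mathbf{Pfn}(\mathbf{NGraph})$ follows since morphisms are determined by their underlying partial functions. It remains to show $d(\theta)\circ\Lambda\circ\theta\sqsubseteq\theta\circ\Theta$. By distributivity of $\circ$ over $\sqcup$ (sup-preservation of the composition, computed by $\sup$--$\min$ for $\mu,\sigma$ and $\inf$--$\max$ for $\nu$), this reduces to the two summands of $\Lambda$, which are handled by the morphism conditions for $k'$ and $h'$ respectively.

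I expect the main obstacle to be the regularity step inside the check for $h$, and also its counterpart in the universality check. Here the neutrosophic setting genuinely differs from the fuzzy one: the pseudo-complement for $\nu$ is not symmetric to that for $\mu$ and $\sigma$. So one must verify carefully that $\Xi-g^{-1}Rg$ together with $g^{-1}Rg$ recovers $\Xi$ exactly on the matched part. My first attempt would be to argue pointwise. At a pair in the support of the match, regularity forces equality of all three degrees with those of $\Xi$. Off the support, the degrees of $g^{-1}Rg$ are $(0,0,1)$, so the difference leaves $\Xi$ unchanged. This reduces everything to the crisp bookkeeping already done by Mizoguchi and Kawahara.
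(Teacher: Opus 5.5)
Your proposal is correct and takes essentially the same route as the paper. The paper gives no independent argument; it only asserts that Mori and Kawahara's pushout result for $\mathbf{Pfn}(\mathbf{FGraph})$ transfers to the neutrosophic setting, and your coordinatewise verification is a fleshed-out version of exactly that transfer. That verification consists of the crisp $\mathbf{Pfn}$ pushout, the split of $\Xi$ into $g^{-1}\circ R\circ g$ and $\Xi-g^{-1}\circ R\circ g$ via regularity of the matching, and the morphism checks for $h$, $k$ and the mediating $\theta$ in $\mu,\sigma,\nu$ with the order reversed for $\nu$. One small point: regularity is a per-coordinate condition, so the ``support'' argument in your last paragraph must be run separately for $\mu$, $\sigma$ and $\nu$, with $\nu\neq 1$ playing the role of ``nonzero''; your coordinatewise check already does this.
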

The next thing we need to define is the vague matching condition.
\begin{definition}
Assume that $\epsilon_1, \epsilon_2, \epsilon_3\in[0,1]$. Then, $g:V\rightarrow G$ is an $\varepsilon$-matching from
$(V,R)$ to $(G,\Xi)$ if and only if the relation $g^{-1}\circ R\circ g$ is $\varepsilon$-regular with respect to $\Xi$.
\end{definition}
Two neutrosophic graphs $(V,R)$ and $(W,S)$ are $\varepsilon$-equal if and only if $R$ is $\varepsilon$-regular 
with respect to $S$ and vice versa.

An alternative approach to the categorical representation of fuzzy graphs has been presented in~\cite{rosset2023}.
Here we adopt the theory developed in~\cite{rosset2023} to the case of neutrosophic graphs. 
\begin{definition}
Assume that $\mathcal{L}=[0,1]^{3}$ and that $(\mathcal{L},\le)$ is a poset. Then, a neutrosophic fuzzy set is
a pair $(A,\alpha)$ consisting of a set $A$ and a membership function 
$\alpha:A\rightarrow\mathcal{L}$. A morphism $f:(A,\alpha)\rightarrow(B,\beta)$ of fuzzy sets
is a function $f:A\rightarrow B$ such that $\alpha\le \beta\circ f$, that is, $\alpha(a)\le (\beta\circ f)(a)$
for all $a\in A$. They form the category $\mathbf{NFuzzySet}(\mathcal{L})$.
Membership functions are represented in gray. The fact that $\alpha\le \beta\circ f$
is expressed by the following diagram, which is called $\le$-commuting.
\begin{center}
\includegraphics[scale=1]{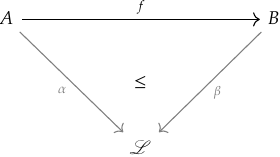}
\end{center}
\end{definition}
\begin{definition}
A neutrosphic graph $(A, \alpha)$ consists of a graph $A$ and
two membership functions $\alpha_V : A(V)\rightarrow [0,1]^{3}$, and 
$\alpha_E:A(E)\rightarrow [0,1]^{3}$, where $[0,1]^{3}$ is a posets. 
A neutrosophic graph homomorphism $f : (A,\alpha)\rightarrow (B, \beta)$ 
is a graph homomorphism $f : A\rightarrow B$ such that $\alpha_V\le\beta_V\circ f_V$ and 
$\alpha_E \le \beta_E\circ f_E$. All neutrosphic graphs and the neutrosophic graph 
homomorphism between them form the category $\mathbf{NGraph}$.
\end{definition}
\begin{definition}
A presheaf on a category $\mathbf{I}$ is a functor $F:\mathbf{I}^{\mathrm{op}}\rightarrow\mathbf{Set}$. 
An arrow of presheaves is a natural transformation between functors. The category of presheaves on 
$\mathbf{I}$ is denoted by $\mathbf{Set}^{\mathbf{I}^{\mathrm{op}}}$.
\end{definition}
Suppose that  $\mathsf{incl}:\mathbf{Set}\rightarrow\mathbf{Poset}$ the inclusion functor that gives the
coarsest possible partial order, that is, the reflexive relation.
\begin{definition}
Given a category $\mathbf{I}$ and a functor $L:\mathbf{I}^{\mathrm{op}}\rightarrow\mathbf{Poset}$, 
a (poset) $L$-fuzzy presheaf is a pair $(A, \alpha)$ consisting of a presheaf 
$A:\mathbf{I}^{\mathrm{op}}\rightarrow\mathbf{Set}$ and a lax natural transformation 
$\alpha: \mathsf{incl}\cdot A \Rightarrow L$, that is, a natural transformation whose naturality 
squares $\le$-commute, as shown in the diagram below
right for each $\iota:j\rightarrow i$. A morphism of $L$-fuzzy presheaves
$f: (A,\alpha)\rightarrow (B,\beta)$ is a natural transformation $f: A\rightarrow B$
such that $\alpha \le \beta\circ f$, that is, $\alpha_i\le\beta_i\circ f_i$ for all $i\in\mathbf{I}$. 
The result is the category $\mathbf{FuzzyPresheaf}(\mathbf{I}, L)$.
\begin{center}
\includegraphics[scale=1]{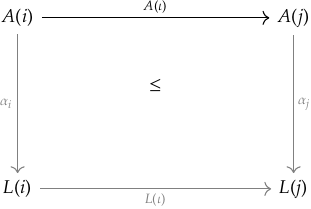}
\end{center}
\end{definition}
\begin{theorem}
Suppose that $\mathbf{I}$ is a small category and a functor $L:\mathbf{I}^{\mathrm{op}}\rightarrow\mathbf{CompHeytAlg}$.
Then, $\mathbf{FuzzyPresheaf}(\mathbf{I}, L)$ is a quasitopos.
\end{theorem}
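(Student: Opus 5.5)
I would prove that $\mathbf{FuzzyPresheaf}(\mathbf{I},L)$ is a quasitopos by checking the standard characterization directly: a category is a quasitopos when it is finitely complete and finitely cocomplete, locally cartesian closed, and has a classifier for strong (regular) subobjects. The model for every construction is $\mathbf{NFuzzySet}(\mathcal{L})$ with $\mathcal{L}$ a complete Heyting algebra, done pointwise in $i\in\mathbf{I}$. Throughout I use the forgetful functor $U:(A,\alpha)\mapsto A$ into $\mathbf{Set}^{\mathbf{I}^{\mathrm{op}}}$, which is a topos. The idea is that each categorical construction is computed in presheaves and then given the ``best'' membership function allowed by the completeness of each $L(i)$.

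\textbf{Limits and colimits.} Take the underlying presheaf limit $\lim A_k$ and give it the pointwise infimum membership $\alpha(x)_i=\bigwedge_k \alpha_{k,i}(\pi_k x)$. Take the underlying colimit and give it the supremum over preimages, $\alpha(y)_i=\bigvee_{x\mapsto y}\alpha_{k,i}(x)$. Two things need checking for each.
\begin{itemize}
\item \emph{Lax naturality.} It holds because each $L(\iota)$ is monotone. For the infimum this is immediate. For the supremum I need $L(\iota)$ to preserve joins, at least laxly, in the direction $\bigvee L(\iota)(-)\le L(\iota)(\bigvee -)$; monotonicity already gives this.
\item \emph{Universal property.} The infimum is the largest membership making the projections morphisms, and the supremum is the smallest making the coprojections morphisms. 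So the unique presheaf mediating map is automatically a morphism.
\end{itemize}

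\textbf{Local cartesian closure.} Since slices of a fuzzy-presheaf category are again of the same shape (fuzzy presheaves over the category of elements), I would first prove cartesian closedness and then transfer it to slices. For exponentials $(B,\beta)^{(A,\alpha)}$ I mimic the presheaf exponential: an element at $i$ is a natural family $f_j:\mathbf{y}(i)(j)\times A(j)\to B(j)$. Its membership is
\[
\bigwedge_{\iota:j\to i}\ \bigwedge_{a\in A(j)}\Bigl(L(\iota)(m)\wedge\alpha_j(a)\Rightarrow\beta_j(f_j(\iota,a))\Bigr)
\]
in the appropriate Heyting form, i.e.\ the largest $m\in L(i)$ for which this holds. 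Completeness of $L(i)$ guarantees this meet exists, and the Heyting adjunction $x\wedge y\le z\iff x\le y\Rightarrow z$ gives the currying bijection. The subtle point is making $m$ well defined. I expect I will need the transition maps $L(\iota)$ to be Heyting algebra morphisms, preserving $\wedge$ and $\Rightarrow$, so that the membership is laxly natural. That is exactly what the hypothesis $L:\mathbf{I}^{\mathrm{op}}\to\mathbf{CompHeytAlg}$ provides.

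\textbf{Strong subobject classifier.} Strong monos are the inclusions $(S,\sigma)\hookrightarrow(A,\alpha)$ with $\sigma=\alpha|_S$, i.e.\ the membership is inherited rather than lowered. The classifier is therefore $(\Omega,\top)$, where $\Omega$ is the subobject classifier of $\mathbf{Set}^{\mathbf{I}^{\mathrm{op}}}$ (sieves) with constant top membership. Given a strong mono, I take its classifying map in presheaves. This is automatically a morphism because $\top$ dominates everything, and the pullback of $\mathrm{true}$ carries the restricted membership by the infimum formula. Uniqueness comes from the presheaf case.

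\textbf{Main obstacle.} I expect the hardest step to be the exponential (and its slice version): making its membership simultaneously laxly natural and universal. This is where the Heyting implication and completeness of $L(i)$ are genuinely used. Alternatively, one could try to exhibit $\mathbf{FuzzyPresheaf}(\mathbf{I},L)$ as a category of separated presheaves or a Grothendieck construction over a quasitopos and invoke closure results, but I would commit to the direct verification above.
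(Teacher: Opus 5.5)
Your strategy of directly verifying the quasitopos axioms is reasonable; the paper itself gives no proof here, it only imports the result from the fuzzy-presheaf work it cites. Three parts of your plan are fine: finite limits via pointwise meets, the strong-subobject classifier $(\Omega,\top)$, and the identification of slices with fuzzy presheaves on the category of elements of $X$ with $L'(i,x)={\downarrow}\xi_i(x)$. The gap is in what you require of the transition maps $L(\iota)$.

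\textbf{Colimits.} Take the convention $L(\iota)\circ\alpha_i\le\alpha_j\circ A(\iota)$, the one under which your infimum case really is immediate. Lax naturality of the sup membership on a colimit then needs $L(\iota)\bigl(\bigvee_x\alpha_{k,i}(x)\bigr)\le\bigvee_x L(\iota)(\alpha_{k,i}(x))$. This is the \emph{opposite} of what monotonicity gives. Since fibres of a finite colimit of presheaves can be infinite, you need $L(\iota)$ to preserve arbitrary joins.

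\textbf{Exponentials.} The same issue breaks your exponential as written. The term $\alpha_j(a)\Rightarrow\beta_j(f_j(\iota,a))$ lives in $L(j)$, not $L(i)$. Completeness of $L(i)$ does not produce a largest $m\in L(i)$ with $L(\iota)(m)\le\alpha_j(a)\Rightarrow\beta_j(f_j(\iota,a))$ for all $\iota\colon j\to i$ and $a\in A(j)$; join preservation of the $L(\iota)$ does. In that case $L(\iota)$ has a right adjoint $L(\iota)_*$, and $m=\bigwedge_{\iota,a}L(\iota)_*\bigl(\alpha_j(a)\Rightarrow\beta_j(f_j(\iota,a))\bigr)$. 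Lax naturality of $m$ then follows from $L(\kappa\lambda)=L(\lambda)\circ L(\kappa)$ alone. So preservation of $\wedge$ and $\Rightarrow$, which you single out, plays no role.

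\textbf{Why Heyting morphisms are not enough.} Preservation of $\wedge$ and $\Rightarrow$ is genuinely insufficient. Take $\mathbf{I}=\{\iota\colon j\to i\}$, $L(i)=\mathcal{P}(\mathbb{N})$, $L(j)=\{0,1\}$, and $L(\iota)$ the indicator of a non-principal ultrafilter $\mathcal{U}$. This is a Boolean, hence Heyting, homomorphism. Let $A(i)=\emptyset$, $A(j)=\{a\}$ with $\alpha_j(a)=1$, and let $B$ be the terminal presheaf with $\beta_i=\emptyset$, $\beta_j=0$. For the terminal presheaf with membership $\gamma$, a map $(1,\gamma)\times(A,\alpha)\to(B,\beta)$ exists iff $\gamma_j=0$. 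Suppose an exponential $(E,\epsilon)$ existed.
Testing with $\gamma=\bot$ forces $E(i)=\{e\}$.
Testing with $\gamma_i=X\notin\mathcal{U}$ and $\gamma_j=0$ forces $X\subseteq\epsilon_i(e)$ for every such $X$, in particular every singleton, so $\epsilon_i(e)=\mathbb{N}$.
Then $(1,\gamma)$ with $\gamma_i=\mathbb{N}$, $\gamma_j=1$ maps into $E$, by lax naturality of $\epsilon$, even though it admits no map into $B$ after taking the product with $A$. So no exponential exists.

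The hypothesis that makes the theorem true, and what $\mathbf{CompHeytAlg}$-morphisms must mean, is arbitrary-join preservation. With it, your plan goes through.

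\textbf{Slices.} This correction also matters for your slice step. The restricted maps ${\downarrow}\xi_i(x)\to{\downarrow}\xi_j(X(\iota)x)$ preserve joins, but in general neither the top element nor $\Rightarrow$. So slices stay in your class only if the class is ``complete Heyting algebras with join-preserving maps'', not Heyting morphisms as you propose.
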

Since a quasitopos is a category for which every finite limit and finite colimit exists, this means that pushouts exists.
Therefore, graph rewriting rules can be represented as pushouts. In order to avoid lax natural transformation  and 
other such constructs, we can adopt a definition of fuzzy categories such as those defined in~\cite{syropoulos2020}. 
\section{Neutrosophic Hypergraph Rewriting}
Neutrosophic hypergraphs of the kind we are interested in are defined as follows~\cite{akram2018}
\begin{definition}
Assume that $V = \{v_1, v_2,\ldots, v_n\}$ be a finite set and $E = \{E_1, E_2,\ldots, E_m\}$ be a
finite family of non-trivial single-valued neutrosophic subsets of $V$ such that
\begin{displaymath}
V =\bigcup_{i} \mathalpha{\mathrm{supp}}(E_i), i =1, 2,\ldots, m,
\end{displaymath}
where the edges $E_i$ are single-valued neutrosophic subsets of $V$, 
\begin{displaymath}
E_i = \bigl\{\bigl(v_j , T_{E_i}(v_j), I_{E_i}(v_j),F_{E_i}(v_j)\bigr)\mathrel{\bigm|}j=1,2,\ldots,k\bigr\}, 
\end{displaymath}
$E_i\neq\emptyset$, for $i = 1, 2,\ldots, m$. Then, the pair $H = (V, E)$ is a single-valued
neutrosophic hypergraph on $V$, $E$ is the family of single-valued neutrosophic hyperedges of
$H$ and $V$ is the crisp vertex set of $H$.
\end{definition}

We can use the neutrosophic sets $E_i$ to form the cartesian product $\mathcal{D}=D_1\times D_2\times\cdots\times D_m$.
Then, we can define a neutrosophic hypergraph by $(V,R)$, where $R\subseteq\mathcal{D}$. Thus, we have managed to transform
a neutrosophic hypergraph to a neutrosophic relation. This means that we can generalize the relational calculus of the
previous section for neutrosophic hypergraphs and so we can define hypergraph rewriting as a pushout in some category. 
In addition, it is possible to reformulate Wolfram's models as rules for rewriting rewriting collections 
of neutrosophic relations. Of course this is useful once we accept that vagueness is a fundamental property of this world.

\section{Conclusions}
We have explained why graph rewriting is a model of computation and then we defined categorically neutrosophic graph rewriting. In
addition, we showed how an alternative approach to categorical neutrosophic graph rewriting can be used for ``full'' neutrosophic graphs.
Moreover, we briefly examined neutrosophic hypergraph rewriting and its possible applications.  We believe this is an exciting field of
study and research as there are many things to be examined and studied.

\end{document}